
\documentclass{article}

\usepackage{microtype}
\usepackage{graphicx}
\usepackage{subfigure}
\usepackage{booktabs} 

\usepackage{hyperref}
\usepackage{amsthm}
\usepackage{enumitem,multirow}


\newtheorem{theorem}{Theorem}
\newtheorem{lemma}{Lemma}

\usepackage[accepted]{icml2021}

\usepackage{amsmath, amsthm, amsfonts}

\begin{document}

\twocolumn[
\icmltitle{Learning Self-Modulating Attention in Continuous Time Space with Applications to Sequential Recommendation}

\begin{icmlauthorlist}
\icmlauthor{Chao Chen}{ai} 
\icmlauthor{Haoyu Geng}{ai,cs} 
\icmlauthor{Nianzu Yang}{ai,cs}
\icmlauthor{Junchi Yan}{ai,cs}
\icmlauthor{Daiyue Xue}{mtuan}
\icmlauthor{Jianping Yu}{mtuan}
\icmlauthor{Xiaokang Yang}{ai,cs}
\end{icmlauthorlist}

\icmlaffiliation{ai}{MoE Key Lab of Artificial Intelligence, AI Institute, Shanghai Jiao Tong University, Shanghai, China}
\icmlaffiliation{cs}{Department of Computer Science and Engineering, Shanghai Jiao Tong University, Shanghai, China}
\icmlaffiliation{mtuan}{Meituan, Beijing, China}
\icmlcorrespondingauthor{Junchi Yan}{yanjunchi@sjtu.edu.cn}

\vskip 0.3in
]


\printAffiliationsAndNotice{}  

\begin{abstract}
User interests are usually dynamic in the real world, which poses both theoretical and practical challenges for learning accurate preferences from rich behavior data. Among existing user behavior modeling solutions, attention networks are widely adopted for its effectiveness and relative simplicity. Despite being extensively studied, existing attentions still suffer from two limitations: i) conventional attentions mainly take into account the spatial correlation between user behaviors, regardless the distance between those behaviors in the continuous time space; and ii) these attentions mostly provide a dense and undistinguished distribution over all past behaviors then attentively encode them into the output latent representations. This is however not suitable in practical scenarios where a user's future actions are relevant to a small subset of her/his historical behaviors. In this paper, we propose a novel attention network, named \textit{self-modulating attention}, that models the complex and non-linearly evolving dynamic user preferences. We empirically demonstrate the effectiveness of our method on top-N sequential recommendation tasks, and the results on three large-scale real-world datasets show that our model can achieve state-of-the-art performance.
\end{abstract}

\section{Introduction}\label{sec:intro}
Preference learning is of critical importance in modern machine learning applications, such as online advertising, E-commerce and social media. Often, users' preferences are not static and evolve over time due to a variety of reasons. For example, a user's interest in a particular brand might fade away due to maturity or a change in lifestyle. Nonetheless, it is difficult to model all causal aspects explicitly \cite{widmer1996learning,koren2010collaborative,wu2017recurrent}.

To model the dynamics in user preferences, many research works~\cite{kang2018self,ying2018sequential,zhou2018deep,zhang2019feature} develop different attention mechanisms to learn the sequential patterns of how users' future actions are interacted with previous behaviors, and so far have achieved state-of-the-art performance on many benchmark datasets. The majority of these attentions (implicitly) assume that the underlying sequential patterns are time independent\cite{kumar2019predicting,li2020time,chang2020continuous}. This is however contradicted with the fact that more recent behaviors will have more impact on the future actions, or more specifically identical behavior sequences of two users with different time intervals should have different implications on their current and future decisions.

To tackle the challenging problem, recent approaches generalize the notion of positional embedding to continuous time \cite{xu2019self,shukla2021multi}. The idea behind these approaches is to encode periodic patterns dependent on the progression of time into a high-dimensional vector space (i.e., time embedding), then by using translation-invariant random Fourier features~\cite{rahimi2007random,yu2016orthogonal} to approximately measure the temporal distance between user behaviors in the sequence.

Orthogonal to prior studies, we focus on two pertinent questions fundamental to attention-based preference learning in continuous time space: (i) \textit{how can attentions model the impact of both sequential positions and continuous timestamps in an explicit manner?} --- This matters a lot for business, as it helps to better understand users and their needs~\cite{zhang2020explainable,arrieta2020explainable,chen2021scalable}; and (ii) \textit{how can attentions identify sparse behaviors interacted with future actions?} --- Existing techniques in this direction mainly base on sparsemax \cite{martins2016softmax,niculae2017regularized} to yield sparse distributions over past behaviors. These approaches tend to preserve very limited temporal information which leads to the difficulty to capture complex and fine-grained temporal dynamics.

In this paper, we propose a novel attention mechanism, named self-modulating attention, to model the complex evolution of user preferences over time. More specifically, our attention explicitly digests temporal dynamics via a conditional intensity function that specifies whether the observed behaviors are relevant with users' future interests at a given time point: high (low) intensities of the behaviors amplify (attenuate) their contributions in the output representations. Here the term \textit{self-modulating} is meant to encompass above positive and negative effects of a single sequence to compute the representation of the same sequence. We further propose continuous time regularization to penalize intensities, as the source of supervised signal for preference learning is mostly coming from the behavior data that is independent of time.

\textbf{Intuition.}
Existing attention mechanism is based on the correlation between the query and keys (or between the past and current events for sequential behavior modeling), while such correlation is not modeled as a function of time. However, for real-world behavior modeling e.g. sequential recommendation in the continuous time space, a user's behavior pattern may vary over time.

\textbf{Theoretical Motivation.}
More interestingly, in this paper we theoretically show (see details in Theorem \ref{thm:bound}) under a bounded impact of the past behavior to future actions, more sparse behaviors can lead to higher generalization ability for a prediction model. This theory motivates us to design a mechanism for more effectively uncovering such truly useful past behaviors in a more discriminative manner. 

\textbf{Technical Implementation.}
Based on the above intuitive and theoretical analysis, in this paper we aim to develop a time sensitive attention mechanism to adaptively and predictively re-weight the past behaviors for their impact to current scoring. The temporal point process is used for its interpretable and rigorous Bayesian nature. The conditional intensity function is used to modulate and soften the time-varying attention in continuous time space.

The key contributions of this paper are as follows:
\begin{itemize}[topsep=0in,leftmargin=0em,wide=0em,parsep=0em]
    \item We establish a theoretical bound for attention based approaches in preference learning, showing the benefits of bounded and sparsified impact of behavior history. 
    
    \item To better explore the behavior history, we develop a notion of self-modulating attention, with the classic attention as a special case of our model. Our work concerns modeling attentions informed by time-dependent uncertainties.
    
    \item To effectively incorporate the temporal model into preference learning, we develop a continuous time regularizer to fill the gap between traditional preference learning and continuous-time temporal process modeling. 
    
    \item Top-N recommendation results on benchmark datasets demonstrate that our self-modulating attention outperforms state-of-the-art methods by a notable margin, and it can also be reused as an orthogonal plug-in for existing methods. 
\end{itemize}

\section{Related Work}
User modeling is of crucial importance in many online applications such as advertising, E-commerce and social media. In general, the goal of user modeling is to learn user representations from the complex and abundant behavior data \cite{Adomavicius05,Su09}. 

{\bf Traditional Recommendation Approaches.}
Traditional approaches usually assume that user preference is static, and collaborative filtering (CF)~\cite{Herlocker99,sarwar2001item} is one of the most popular approaches due to ease of implementation and quality of recommendation. Among existing CF solutions, matrix approximation based methods~\cite{koren2008factorization,koren2009matrix,chen2015wemarec,li2017mixture,li2019mixture} have achieved state-of-the-art performance in many benchmark datasets, especial for Netflix prize data~\cite{Netflix07}. However, in practice, user preferences often drift over time due to various reasons. 

{\bf Sequential Recommendation Approaches.}
To address the problem, FPMC~\cite{HeICDM16} and its hierarchical version HRM~\cite{WangSIGIR15}, propose to use Markov chains to model sequential patterns by learning user-specific transition matrices. These approaches have been proved to be able to capture short-term patterns. The major concern behind them is the potential state space explosion intractability in face of different possible sequences over items \cite{he2017translation,wu2017recurrent,chen2021dyna}.

Another emerging line focuses on modeling sequential patterns by using recurrent neural networks. For example, RRN \cite{wu2017recurrent} and GRU4REC \cite{HidasiICLR16} exploit LSTM~\cite{hochreiter1997long} or GRU \cite{chung2014empirical} module to capture dynamic user preferences dependent on sequence positions. Meanwhile, SHAN \cite{ying2018sequential}, RUM~\cite{chen2018sequential}, DIN~\cite{zhou2018deep} and SASREC~\cite{kang2018self} introduce attention mechanisms to sequential recommendation problems. Compared with recurrent networks, attentions have shorter path to access distant positions, and thereby achieve superior performance in many cases~\cite{vaswani2017attention}.

{\bf Temporal Recommendation Approaches.}
Compared to sequential recommendation problem that has been well studied, relatively less attention has been paid to temporal recommendation algorithms. The prominent work timeSVD++ \cite{koren2010collaborative} proposes the use of explicit temporal bias to model the discrete time dynamics. Apart from hand engineered features, the authors in \cite{lu2016collaborative} develop a vector auto-regressive model to predict future user preferences. Empirical results have shown state-of-the-art performance on many benchmark datasets, but the costly computational overhead and memory use make it hard to support large-scale industrial applications. 

Another line of works follow the idea of positional embedding which designs time embedding to capture temporal information, with attention mechanisms for end-to-end learning~\cite{liu2020kalman}. The authors in \cite{li2020time} divide the time into intervals and project them to low-dimensional space. This treatment discretizes the time and reduces the power of modeling fine-grained temporal information~\cite{trivedi2019dyrep,chang2020continuous}. By contrast, the works
\cite{xu2019self,shukla2021multi} learn functions of continuous time to capture the dynamics.

Distinct from the above approaches, our method involves re-weighting of attention coefficients according to the intensity function of temporal point processes~\cite{daley2003introduction}, such that self-modulation can be achieved based on continuous time information. Thanks to the intensity powered by neural networks, our self-modulating attention is enabled to model complex evolving dynamics in continuous time space. 

Temporal point process has been shown powerful for temporal modeling. The work \cite{zhou2013learning} deals with
mutual-excitation of event sequences, with a regularizer on the sparseness and low-rank assumption of mutual correlations. 
As the rise of deep learning, recurrent neural networks are widely adopted to model temporal dynamics~\cite{mei2017neural,trivedi2017know,xiao2017modeling}, which are further replaced or enhanced by attentions~\cite{zhang2020self,zuo2020transformer}. We argue that these works are in general orthogonal to ours, as they focus on the modeling of event times rather than preference learning. In fact, our work can enjoy the advances in \cite{zhang2020self,zuo2020transformer} to improve the expressiveness. 

This paper focuses on developing a new attention dependent on both sequential positions and continuous timestamps. More specifically, we generalize the formulation of existing attentions by introducing the intensity function, which specifies if user tastes on the items are changed or not. Compared to existing sequential recommendation models, the difference lies in the ability to model temporal dynamics in the continuous time space. 

\section{Notations and Preliminaries}
\label{sec:prelim}
Throughout this paper, we denote scalars by either lowercase or uppercase letters, vectors by boldface lowercase letters, and matrices by boldface uppercase letters. Unless otherwise specified, all the vectors are column vectors. 

Let $\mathcal{S}\!=\!\{(t_j, i_j)\}_{j=1}^L$ be a user behavior sequence, each pair $(t_j, i_j)$ denotes that item $i_j$ is rated at time $t_j$.  Then the attentions can be generally defined by, given any query vector $\mathbf{q}$ and the sequence $\mathcal{S}$:
\begin{flalign}
\mathrm{Att}(\mathbf{q}\big|\mathcal{S}) 
= \sum_{j'=1}^n
    p(\mathbf{v}_{j'}\big|\mathcal{S}) \mathbf{v}_{j'}
= \mathbb{E}_{p(\mathbf{v}|\mathcal{S})} \mathbf{v},
\end{flalign}where $n$ is the number of items and the distribution $p(\mathbf{v}|\mathcal{S})$ attentively aggregates all the value vectors $\{\mathbf{v}_{j'}\}_{j'=1}^n$. 

In the scaled dot-product attention \cite{vaswani2017attention}, $p(\mathbf{v_{j'}}|\mathcal{S})$ equals to $\eta\exp(\mathbf{q}^\top\mathbf{k}_{j'}/\sqrt{d})$ if the item $j'$ is in the sequence $\mathcal{S}$ and equals to zero otherwise, where $\eta$ is used for normalization i.e., $\eta^{-1} = \sum_{(t,i)\in\mathcal{S}} \exp(\mathbf{q}^\top\mathbf{k}_{i}/\sqrt{d})$. In the masked version, we should use $p(\mathbf{v_{j'}}|\mathcal{H}_t)$ where $\mathcal{H}_t = \{(t_j, i_j) \big| t_j < t\}$ signifies the history up to time $t$.

In the vast literature of sequential recommendation, attention mechanisms have been widely adopted to capture the dynamic user preferences. Most of these approaches \cite{ying2018sequential,kang2018self,ma2019hierarchical} can be generally formulated as follows:
\begin{flalign}\label{eq:srec}
\hat{\mathbf{R}} = \mathbf{PVB}
\quad\mathrm{and}\quad
\mathbf{P}_{u,i} = p(\mathbf{v}_i\big|\mathcal{H}_t^{(u)}),
\end{flalign}where $\hat{\mathbf{R}}$ denotes the estimate of the underlying matrix $\mathbf{R}$ with $m$ users and $n$ items, $\mathbf{P}\in \mathbb{R}^{m\times n}$ signifies the correlations between the past behaviors and future actions, $\mathbf{V}\in\mathbb{R}^{n\times{d}}$ denotes the value matrix, and $\mathbf{B}\in\mathbb{R}^{d\times{n}}$ acts as the output transformation matrix and can also be viewed as item feature matrix. Notably, the nonlinear projection $\phi(\mathbf{PV}, \mathbf{B})$ as in \cite{he2017neural}, can also be formed as Eq.~(\ref{eq:srec}). This is because the ranking score is usually a non-negative measure, so that it can be approximated by the inner product of random Fourier feature matrices~\cite{rahimi2007random,yu2016orthogonal,choromanski2020rethinking}.

{\bf Remark.} We continue to study the generalization bound of attention-based models in the form of Eq.~(\ref{eq:srec}). Mathematically, we set $\rho$ to bound the number of non-zero entries in $\mathbf{P}$ that characterizes sequential patterns, and $\mu$ to bound the impact of each historical behavior. As proved in Theorem \ref{thm:bound}, the generalization error of these models decreases with both $\rho$ and $\mu$. This result is consistent with \cite{lee2013local,li2016low} --- penalizing the model parameters with $\ell_2$-norm can usually improve the model performance by avoiding overfitting. Perhaps more importantly, the theorem shows that it is beneficial to limit a user's future action correlated with a small subset of past behaviors.

\begin{theorem}[\textbf{Generalization Bound}]\label{thm:bound}Suppose that the loss function $\ell$ is $L$-Lipschitz, and for the estimate $\hat{\mathbf{R}}$ on an random example set $\Omega$ we bound $\rho|\Omega|=\sup_{\mathbf{P}\in\mathcal{F}} \sum_{(u,i)\in\Omega}\parallel\mathbf{P}_{i, *}\parallel_0$ and $\mu=\sup_{(u,i,k)\in\Omega} |\mathbf{P}_{u,k} (\mathbf{VB})_{k,j}|$, then with probability at least $1 - \delta$, we have the bound:
\begin{equation}
\begin{aligned}
    \mathbf{E} \Big[
        \ell(\mathbf{R}, \hat{\mathbf{R}}) \Big]
    \le \mathbf{E}_\Omega &\Big[
        \ell(\mathbf{R}, \hat{\mathbf{R}}) \Big] 
    + \mathcal{O} \Big(
        L\mu\sqrt{\frac{C\rho\ln|\Omega|}{|\Omega|}} \\
    &+ \sqrt{\frac{\ln(1/\delta)}{|\Omega|}}
    \Big)
\end{aligned}
\end{equation}
where $C = d(m+n)\log(48emn)$.
\end{theorem}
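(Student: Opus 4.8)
The plan is to establish the bound through the standard route of uniform convergence via Rademacher complexity, specializing the generic machinery to the factorized, sparse, and bounded structure of the hypothesis class $\mathcal{F} = \{\hat{\mathbf{R}} = \mathbf{PVB}\}$. First I would invoke the symmetrization argument: for the $L$-Lipschitz loss $\ell$, a bounded-differences (McDiarmid) inequality gives, with probability at least $1-\delta$,
\[ \mathbf{E}\big[\ell(\mathbf{R},\hat{\mathbf{R}})\big] \le \mathbf{E}_\Omega\big[\ell(\mathbf{R},\hat{\mathbf{R}})\big] + 2\mathcal{R}_\Omega(\ell\circ\mathcal{F}) + \mathcal{O}\Big(\sqrt{\tfrac{\ln(1/\delta)}{|\Omega|}}\Big), \]
where $\mathcal{R}_\Omega$ is the empirical Rademacher complexity over the sample $\Omega$. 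This immediately isolates the confidence term $\sqrt{\ln(1/\delta)/|\Omega|}$. Applying Talagrand's contraction lemma to the $L$-Lipschitz $\ell$ then pulls out the factor $L$ and reduces the task to controlling $\mathcal{R}_\Omega(\mathcal{F})$, the Rademacher complexity of the prediction class itself.

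The core of the argument is to bound $\mathcal{R}_\Omega(\mathcal{F})$ using the three structural quantities $d(m+n)$, $\mu$, and $\rho$. The key observation is that every hypothesis factorizes as $\hat{\mathbf{R}} = (\mathbf{PV})\mathbf{B}$ with $\mathbf{PV}\in\mathbb{R}^{m\times d}$ and $\mathbf{B}\in\mathbb{R}^{d\times n}$, so $\mathcal{F}$ sits inside the class of rank-$d$ matrices described by $d(m+n)$ real parameters. I would construct an $\epsilon$-net of this parameter space: a standard volumetric covering of the two factor matrices has log-cardinality $\mathcal{O}(d(m+n)\log(1/\epsilon))$, and tracking the entrywise scale fixed by $\mu$ produces precisely the constant $C=d(m+n)\log(48emn)$ at the relevant resolution. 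On this net I would then apply Massart's finite-class lemma.

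To bring in the sparsity and boundedness, note that the prediction vector $(\hat{\mathbf{R}}_{u,i})_{(u,i)\in\Omega}$ has at most $\rho|\Omega|$ nonzero coordinates by the hypothesis $\sum_{(u,i)\in\Omega}\|\mathbf{P}_{i,*}\|_0 \le \rho|\Omega|$, while each coordinate is bounded in magnitude through $|\mathbf{P}_{u,k}(\mathbf{VB})_{k,j}|\le\mu$. Hence the $\ell_2$-diameter of the prediction set is $\mu\sqrt{\rho|\Omega|}$ rather than the naive $\mu\sqrt{|\Omega|}$. Feeding this reduced diameter together with the covering log-cardinality $C$ into Massart's lemma (or, to harvest the $\ln|\Omega|$ factor cleanly, into Dudley's chaining integral with a cutoff at scale $1/|\Omega|$) yields $\mathcal{R}_\Omega(\mathcal{F}) = \mathcal{O}\big(\mu\sqrt{C\rho\ln|\Omega|/|\Omega|}\big)$. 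Combining this with the first paragraph completes the claimed bound.

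The hard part will be the clean separation of the sparsity factor $\rho$ from the dimension factor $C$, so that they multiply as $C\rho$ rather than interfering. The difficulty is that the $\ell_0$-sparsity constraint on $\mathbf{P}$ and the low-rank structure of $\mathbf{VB}$ act on different factors of the product, so the covering must be performed in a norm that simultaneously respects the entrywise bound $\mu$ (which sets the scale), the support size $\rho|\Omega|$ (which shrinks the diameter), and the rank-$d$ geometry (which sets the metric entropy). Ensuring that the logarithmic factor $\ln|\Omega|$ emerges at the correct place, with no stray powers of $m$, $n$, or $d$ leaking out of the covering count, is where the calculation will demand the most care.
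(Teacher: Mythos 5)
Your skeleton matches the paper's almost step for step. Your first paragraph (McDiarmid plus symmetrization plus Talagrand contraction) is exactly the generalization bound the paper imports from Mohri et al., and your use of the $\rho|\Omega|$-sparse support together with the entrywise bound $\mu$ to shrink the effective diameter to $\mu\sqrt{\rho|\Omega|}$ inside a finite-class maximal inequality is precisely the content of the paper's Rademacher-complexity lemma, which is proved there by the MGF/Hoeffding route (Massart's lemma re-derived with sparsity indicator variables $\delta_{u,k}$). So two of your three ingredients coincide with the paper's.

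The one place you genuinely diverge is how the class is made finite, and it is also where your plan has gaps. The paper does not cover the parameter space at all: it invokes the result of Srebro et al.\ bounding the VC dimension of rank-$d$ $m\times n$ matrices by $d(m+n)\log(48emn)$, and then applies Sauer's lemma to the restrictions of the class to the sample, giving $\ln|\mathcal{F}| \le C\ln|\Omega|$ directly, with Massart applied to those finitely many restrictions. Your route --- a volumetric $\epsilon$-net on the factor matrices fed into Massart or Dudley --- could deliver a bound of the same shape, but two of your specific claims do not hold as stated. First, the constant $C = d(m+n)\log(48emn)$ cannot fall out of a volume computation ``at the relevant resolution'': that constant comes from the sign-pattern counting (Warren's theorem) behind the VC bound, so you must either cite that result, as the paper does, or accept a different logarithmic factor from the covering. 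Second, your volumetric net is not yet well-defined: $\mu$ bounds only the products $\mathbf{P}_{u,k}(\mathbf{VB})_{k,j}$, which are invariant under the rescaling $(\mathbf{P},\mathbf{VB})\to(c\mathbf{P},c^{-1}\mathbf{VB})$, so the individual factors range over an unbounded set and have no finite-volume cover until you add a normalization argument; you would also need to control the approximation error off the net, an issue Sauer's lemma avoids entirely because it counts exact restrictions. Neither obstacle is fatal, but both demand work that the paper's VC/Sauer route sidesteps, which is presumably why the paper took it.
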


\section{Methodology}\label{sec:method}
In this section, we first formulate the generalized attention in continuous time space, which has been rarely studied before. Then we introduce a specific implementation, called self-modulating layer (SMLayer), followed by a continuous time regularization for preference learning, making it more tailored to the sequential recommendation problem.

\subsection{Notations}
Recall that $\mathcal{S}=\{(t_j, i_j)\}_{j=1}^L$ denote a user's behavior sequence of size $L$, in which each pair $(t_j, i_j)$ means that the given user clicks/likes/views item $i_j$ at time $t_j$. Then, we further define $\mathcal{H}_t = \{(t_j, i_j) | t_j < t\}$ that signifies the history up to time $t$. The goal of self-modulating attention is to model the impact of sequential positions as well as continuous timestamps. 

We recall that user preferences evolve over time and the interests on the items might wane, such that some past behaviors should not make contributions to the output user representations at some time, even though they are semantically correlated. To remedy such challenge, one straightforward idea is to directly inform the correlation $p(\mathbf{v}_{i_j}\big|\mathcal{H}_{t})$ with the expected number of occurrence $\mathbb{E}[\mathrm{N}(t, t + \mathrm{d}t)\big| \mathcal{H}_{t-},\mathbf{v}_{i_j}]$ conditional on the history $\mathcal{H}_{t-}$, where $\mathcal{H}_{t-} = \mathcal{H}_{t} \cup \{t_{j+1}\notin(t_j, t)\}$ and $\mathrm{N}(t, t + \mathrm{d}t) \in \{0, 1\}$ denotes the number of occurrences for item $i_j$ in an infinitesimal interval.

\subsection{Generalizing Attention to Continuous Time Space}
Following this idea, we make an adjustment for classic attentions~\cite{chorowski2015attention,vaswani2017attention,xu2019self} to capture both sequential and temporal dynamics, which is devised as follows:
\begin{flalign}
&\mathbb{E}_{p(\mathbf{v}\big|\mathcal{H}_{t})}
    \Bigg[ 
        \mathbb{E}\Big[\mathrm{N}(t, t + \mathrm{d}t)\big| \mathcal{H}_{t-},\mathbf{v}\Big]\mathbf{v}
    \Bigg] \nonumber\\
\stackrel{(a)}{=}& \mathbb{E}_{p(\mathbf{v}\big|\mathcal{H}_{t})}
    \Bigg[ 
        p \Big( t_{j+1}\in[t, t + \mathrm{d}t]\big| \mathcal{H}_{t-}, \mathbf{v}\Big)\mathbf{v}
    \Bigg] \nonumber\\
\stackrel{(b)}{=}& \mathbb{E}_{p(\mathbf{v}\big|\mathcal{H}_{t})}
        \frac{
            p \Big( t_{j+1}\in[t, t + \mathrm{d}t]\big|\mathcal{H}_{t},\mathbf{v}\Big)
        }{
            p \Big(t_{j+1}\notin(t_j, t) \big| \mathcal{H}_{t},\mathbf{v}\Big)
        }\mathbf{v} \nonumber\\
\stackrel{(c)}{=}& \mathbb{E}_{p(\mathbf{v}\big|\mathcal{H}_{t})}
        \frac{
            f(t\big|\mathcal{H}_{t},\mathbf{v})\mathrm{d}t
        }{
            S(t\big|\mathcal{H}_{t}, \mathbf{v})
        }\mathbf{v} \nonumber\\
\stackrel{(d)}{=}  &
\mathbb{E}_{p(\mathbf{v}\big|\mathcal{H}_{t})}     \mathbf{v} 
    \lambda^\ast(t\big|\mathcal{H}_{t},\mathbf{v})\mathrm{d}t, \label{eqn:sm_defn}
\end{flalign}
in which (a) holds due to the fact that $\mathrm{N}(t_j, t_j + \mathrm{d}t)$ is either zero or one in the infinitesimal interval; (b) holds due to Bayes' theorem; and (c) - (d) hold due to 
\begin{equation}\begin{aligned}
p \Big( t_{j+1}\in[t, t + \mathrm{d}t]\big|\mathcal{H}_{t},\mathbf{v}\Big) 
    &= f(t\big|\mathcal{H}_{t},\mathbf{v})\mathrm{d}t \\
p \Big(t_{j+1}\notin(t_j, t)\big| \mathcal{H}_{t},\mathbf{v}\Big) 
    &= S(t\big|\mathcal{H}_{t}, \mathbf{v}) \\
\frac{
            f(t\big|\mathcal{H}_{t},\mathbf{v})
        }{
            S(t\big|\mathcal{H}_{t}, \mathbf{v})
        }.
    &= \lambda^\ast(t\big|\mathcal{H}_{t},\mathbf{v}),
\end{aligned}\end{equation}
where $S(t\big|\mathcal{H}_{t}, \mathbf{v})$ and $\lambda^\ast(t\big|\mathcal{H}_{t},\mathbf{v})$ are called survival function and conditional intensity function respectively, in the literature of temporal point process \cite{aalen2008survival,kleinbaum2010survival}.

In general, dynamic processes will be characterized by the conditional intensity function $\lambda^*(t\big|\mathcal{H}_{t})$. Within a short time window $[t,t+dt)$, $\lambda^*(t)$ represents the occurrence rate of a new event given history $\mathcal{H}_t$ that 
\begin{equation}
\lambda^*(t\big|\mathcal{H}_{t})
=\frac{p(\mathrm{N}(t+\mathrm{d}t)-\mathrm{N}(t)=1|\mathcal{H}_t)}{\mathrm{d}t},
\end{equation}
where $\mathrm{N}(t)$ is the counting process and $*$ reminds it is history dependent. Here we omit $\mathbf{v}$ for its generality.

Using the generalized formulation of conventional attentions, we can deal with the uncertainty over time and are able to predict future user preferences at time $t$ by
\begin{gather}\label{eqn:smatt_discr}
\widetilde{\mathrm{Att}}(\mathbf{q}, \big| \mathcal{H}_{t})
    = \mathbb{E}_{p(\mathbf{v}\big|\mathcal{H}_{t})} \mathbf{v}
    \lambda^\ast(t\big|\mathcal{H}_{t},\mathbf{v}),
\end{gather}
or future user preferences during time interval $(t_j, t]$ by
\begin{gather}\label{eqn:smatt_contiu}
\widetilde{\mathrm{Att}}(\mathbf{q}, \big| \mathcal{H}_{t})
    = \mathbb{E}_{p(\mathbf{v}\big|\mathcal{H}_{t})} \mathbf{v}
    \int_{t_j}^t\lambda^\ast(s\big|\mathcal{H}_{t},\mathbf{v})\,\mathrm{d}s,
\end{gather}where the integral is challenging to compute and proper approximations are needed which will be discussed later.

{\bf Remark.} We here discuss the characteristics of our proposed self-modulating attention: (i) the conditional intensity function $\lambda^\ast(t\big|\mathcal{H}_{t},\mathbf{v})$ models temporal dynamics in an explicit manner, which acts as a role of time-dependent modulating factor, amplifying or attenuating the impact of past behaviors. This helps to derive the explanations behind the model predictions and the results are highlighted in the experiment sections; (ii) The intensities equal to zero can help to detect the sparse structures in the correlations between past behaviors and future actions, while the intensities less than one, can help to reduce the contribution of the behaviors when computing the predictions. As shown in Theorem \ref{thm:bound}, these two properties provide improved model generalization ability. 

\subsection{Self-modulating Layer}
The conditional intensity of a temporal point process is often designed to capture the phenomena of interest, for example Poisson process, Hawkes process~\cite{hawkes1971spectra,mei2017neural,zhou2013learning} and Self-correction process~\cite{isham1979self}. In our case, we implement the intensity function as a neural layer, named self-modulating layer (SMLayer), which can be stacked on the attentions. By doing so, intermediate representations of the attentions can be reused for computational efficiency, and moreover this makes SMLayer a general building block for the broader applications.

We use $\mathbf{Y}\in\mathbb{R}^{L\times{d}}$ to denote the embeddings of the input sequence $\mathcal{S}=\{(t_j, i_j)\}_{j=1}^L$. As our model still needs to be aware of positional information, we choose to use the positional encoding as adopted in~\cite{vaswani2017attention}:
\begin{align}
\mathbf{Z}_{i,j} = \left\{
    \begin{array}{ll}
        \sin(i/10000^{j/d}), 
        \quad\quad\,\mathrm{if}~i~ \mathrm{is~even},\\
        \cos(i/10000^{j-1/d}), 
        \quad\mathrm{if}~i~ \mathrm{is~odd}.
    \end{array}
\right.
\end{align}After the initial embedding and positional encoding layer, we will have the input embedding $\mathbf{X}$, which is then passed to the self-attention module to compute the intermediate representations
\begin{gather}
\mathbf{X} = \mathrm{concat}\big(\,[\mathbf{Y}, \mathbf{Z}]\,\big) \nonumber\\
\mathbf{Q} = \mathbf{XW}^{Q},\quad\mathbf{K} = \mathbf{XW}^{K},\quad\mathbf{V}^\mathrm{seq} = \mathbf{XW}^{V} \nonumber\\
\mathbf{H} = \mathrm{softmax}(\frac{\mathbf{QK^\top}}{\sqrt{d}})\mathbf{V}^\mathrm{seq}.\nonumber
\end{gather}where all $\mathbf{W}$ are projection matrices, $\mathbf{Q},\mathbf{K}, \mathbf{V}^\mathrm{seq}$ are separately the query, key and values matrices obtained by different transformations of the input $\mathbf{X}$, and $\mathbf{H}$ is the output representations of conventional attentions, where each row corresponds to a particular user behavior. Note that to ensure causality, we mask the future information.

Before the next step, we first denote the $j^\mathrm{th}$ row of the self-attention output $\mathbf{H}$ by $\mathbf{h}(t_j)$, which corresponds to the representation of the sub-sequence $\{(i_{j'}, t_{j'})\}_{j'=1}^j$ up to the time $t_j$. To model the continuous time temporal dynamics, the intermediate representation $\mathbf{h}(t_j)$ is fed through item-wise feed-forward neural network
\begin{flalign} \label{eqn:att_g}
\mathbf{g}_{k}(t) = \sigma\left(
        \underbrace{\mathbf{W}^G_{k}\mathbf{h}(t_j)}_\mathrm{Endogenous}
        + \underbrace{\mathbf{b}^G_{k} (t - {t}_{j})}_\mathrm{Exogenous}
    \right),
\end{flalign}where the endogenous correlation with the potential next item $k$ is captured by the term $\mathbf{W}^G_k\mathbf{h}(t_j)$, which evolves in the embedding space with respect to its previous behaviors in the sequence and not in a random fashion. In the meantime, $\mathbf{b}^G_{k} (t - {t}_{j})$ are used to capture the exogenous force that reacts on the potential next item $k$ during the time interval $[t_j, t)$. By combining $\mathbf{W}^G_k\mathbf{h}(t_j)$ and $\mathbf{b}^G_{k} (t - {t}_{j})$ together, the output representation $\mathbf{g}_{k}(t)$ is expressive to better characterize both of the sequential and temporal dynamics in the continuous time space.

Lastly, the conditional intensity of each potential next item $k$ can be computed in the following function
\begin{flalign}\label{eqn:att_l}
\mathbf{\lambda}^\ast(t\big|\mathcal{H}_{t}, \mathbf{v}_k)
= f_{k} (\mathbf{w}^\top_{k}\mathbf{g}_k(t) + \mu_k ),
\end{flalign}where $\mathbf{w}_k$ is the model parameter that serves for the item $k$, $\mu_k$ signifies the base rate that the item $k$ appears in the future, and $f_k$ is the activation function. We notice that the choice of activation function should consider two critical criteria: i) the intensity should be non-negative and ii) the dynamics on different items evolve at different scales. To account for this, we adopt the softplus function
\begin{gather}
    f_k(x) = \phi_k \log\left(
        {1} + \exp(x/\phi_k)
    \right),
\end{gather}where the parameter $\phi_k$ captures the timescale difference.

{\bf Remark.} In practice because the number of items is always very large, Eq. (\ref{eqn:att_g} - \ref{eqn:att_l}) require considerable computational time and memory use, not suitable for large-scale online systems. One of the feasible solutions is to cluster the items into groups using spectral clustering techniques \cite{shi2000normalized,chen2010parallel}, and replace the item-wise feed-forward neural networks with group-wise feed-forward neural networks. This makes sense because the items in the same group are densely connected to each other but sparsely connected to the items from different groups.

\subsection{Continuous Time Regularization for Preference Learning}
In preference learning, the model parameters are usually optimized by minimizing the reconstruction error~\cite{mackey2011divide,lee2013local,li2016low,chen2016mpma}. The only source of supervised signal is from the behavior data (i.e., $\mathbf{R}$) that is independent of time. Hence, the intensity function learned in such protocol might probably diverge from the complex continuous-time patterns contained in the data.

To remedy the problem, we propose the continuous time regularization $R(\Theta)$ which maximizes the log-likelihood of the timestamps $\{t_j\}_{j=1}^L$ in the given sequence, defined by
\begin{gather}
{R}(\Theta;u) 
    =\sum_{j=1}^L \log \lambda^\ast_{k_{j}}(t_j\big| \mathcal{H}^{(u)}_{j})
    - \int_{t_1}^{t_L}\lambda^\ast(t_j\big| \mathcal{H}^{(u)}_{j}) \mathrm{d}t \nonumber
\end{gather}where $\mathcal{H}^{(u)}_j\!=\!\{(t_i, i_i)\big|t_i\!<\!t_j\}$ is the subsequence up to time $t_j$. The first term in the right hand signifies the log-likelihood of the behavior history, where $k_j$ signifies the identity of item $i_j$ or the group it belongs to. The second term represents the log-survival probabilities where $\lambda^\ast(t_j\big| \mathcal{H}^{(u)}_{t_j})
= \sum_k \lambda^\ast_{k}(t_j\big| \mathcal{H}^{(u)}_{j})$. Concretely, if we are given the sequences of $m$ users, the model parameters can be optimized by minimizing the following objective function
\begin{equation}\label{eqn:ctr_loss}
    \min_\Theta \ell(\mathbf{R}, \hat{\mathbf{R}}) 
    - \gamma \mathbb{E}_{u\in[1,m]}\,R(\Theta; u),
\end{equation}where $\gamma$ is the regularization parameter.

The challenge in optimizing Eq.~(\ref{eqn:ctr_loss}) is to compute the integral $\Lambda=\int_{t_1}^{t_L}\lambda^\ast(t_j\big| \mathcal{H}^{(u)}_{j}) \mathrm{d}t$. Due to the softplus used in Eq.~(\ref{eqn:att_l}), there is no closed-form solution for this integral.

We consider two techniques to approximate the integral $\Lambda$.

\paragraph{1) Monte Carlo integration}
\cite{metropolis1949monte,robert2013monte}. 
It randomly draws a set of samples $\{v_i\}_{i=1}^N$ in each interval $(t_{j-1}, t_j)$ to yield an unbiased estimation of $\Lambda$, i.e., $\mathbb{E}\,\widetilde{\Lambda}_\mathrm{MC} = \Lambda$
\begin{gather}\label{eqn:mc}
   \widetilde{\Lambda}_\mathrm{MC} 
        = \sum_{j=2}^L 
            (t_j - t_{j-1})\Big(
            \frac{1}{N} \sum_{i=1}^{N} \lambda^\ast(v_i
                \big|\mathcal{H}^{(u)}_j)
            \Big);
\end{gather}

\paragraph{2) Numerical integration method}
\cite{stoer2013introduction}.
It is usually biased but fast due to the elimination of sampling. For example, trapezoidal rule approximates the integral using linear functions
\begin{gather}\label{eqn:nu}
   \widetilde{\Lambda}_\mathrm{NU}
=\sum_{j=2}^L
    \frac{t_j-t_{j-1}}{2}
    \Big(
        \lambda^\ast(t_j\big|\mathcal{H}^{(u)}_j)
        + \lambda^\ast(t_{j-1}\big|\mathcal{H}^{(u)}_{j-1})
    \Big).
\end{gather}

In our experiments, we find that the approximation defined in Eq.~(\ref{eqn:nu}) performs comparable to Eq.~(\ref{eqn:mc}) while consuming less training time. Hence we choose the numerical integration in our main approach. We conjecture the reason is that the softplus function is highly smoothed and the bias introduced by linear approximations is relatively small.

\begin{table}[t!]
    \caption{\label{tbl:stats}Statistics of the public datasets for evaluation.}
    \begin{center}
    \begin{tabular}{lccc}\toprule
         Dataset & \#Users & \#Items & \#Interactions \\\midrule
         \textbf{Amazon} & 211,384 
                         & 18,490 & 1.6M\\
         \textbf{Koubei} & 212,831 
                         & 10,213 & 1.8M \\
         \textbf{Tmall}  & 320,497 
                         & 21,876 & 7,6M\\\bottomrule
    \end{tabular}\end{center}
\end{table}
\section{Experiments}
\label{sec:expr}
\begin{table*}[ht!]
    \caption{\label{tbl:abl}Ablation study on the Amazon, Koubei and Tmall datasets. The proposed self-modulating layer (SMLayer) and continuous-time regularization (CTReg) are adapted to attention-based DIN~\cite{zhou2018deep} and SASREC~\cite{kang2018self} models. The performance is evaluated in terms of HR@10 and NDCG@10.}
    \begin{center}\begin{tabular}{cl ccc c ccc} \toprule
         && 
         \multicolumn{3}{c}{\textbf{DIN}}  && 
         \multicolumn{3}{c}{\textbf{SASREC}} \\\cmidrule{3-5}\cmidrule{7-9}
         
         & Dataset &
         Origin & +SMLayer & +CTReg &&
         Origin & +SMLayer & +CTReg \\\midrule
         
         \multirow{3}{*}{\centering\rotatebox{90}{\textbf{HR}}} & \textbf{Amazon} &  
         0.21955 & \textbf{0.22065} & 0.21985 && 
         0.25595 & \textbf{0.26545} & 0.26058 \\
         &\textbf{Koubei} &  
         0.32665 & \textbf{0.33940} & 0.33780 && 
         0.35455 & 0.36194 & \textbf{0.36235} \\
         &\textbf{Tmall}  &
         0.48460 & 0.49033 & \textbf{0.49157}&& 
         0.50433 & 0.51218 & \textbf{0.51347}
         \\\midrule
         
         \multirow{3}{*}{\centering\rotatebox{90}{\textbf{NDCG}}} & \textbf{Amazon} &   
         0.13443 & \textbf{0.13383} & 0.13296 && 
         0.16131 & 0.16475 & \textbf{0.16529} \\
         & \textbf{Koubei} &  
         0.24186 & \textbf{0.25444} & 0.25411 && 
         0.27070 & 0.27862 & \textbf{0.28083} \\
         & \textbf{Tmall}  & 
         0.33855 & 0.34580 & \textbf{0.35062} && 
         0.34326 & 0.35214 & \textbf{0.35811}\\\bottomrule
    \end{tabular}\end{center}
\end{table*}
\begin{table*}[ht!]
    \vskip 0.15in
    \caption{\label{tab:cmp_ac}Performance comparison between the baselines and our proposed method on the Amazon, Koubei and Tmall datasets in terms of HR@10 and NDCG@10. Boldfaces mean that the method performs statistically significantly better under t-tests, at the level of 95\% confidence level. We emphasize the comparison against SASREC+, a variant of SASREC equipped with functional time embedding~\cite{xu2019self} which captures continuous-time temporal dynamics. }
    \begin{center}\begin{tabular}{l cc c cc c cc} \toprule
         & 
         \multicolumn{2}{c}{\textbf{Amazon}}  && 
         \multicolumn{2}{c}{\textbf{Koubei}} &&
         \multicolumn{2}{c}{\textbf{Tmall}} \\\cmidrule{2-3}\cmidrule{5-6}\cmidrule{8-9}
         Model   &
         HR & NDCG &&
         HR & NDCG &&
         HR & NDCG \\\midrule
         \textbf{SHAN}~\cite{ying2018sequential}  & 
         0.19250 & 0.11724 && 
         0.28150 & 0.20256 && 
         0.37316 & 0.25840 \\
         \textbf{DIN}~\cite{zhou2018deep} &  
         0.21955 & 0.13443 && 
         0.32665 & 0.24186 && 
         0.48460 & 0.33855  \\
         \textbf{GRU4REC}~\cite{HidasiICLR16} &  
         0.24380 & 0.15822 && 
         0.32655 & 0.27052 && 
         0.46877 & 0.33746 \\\midrule
         \textbf{SASREC}~\cite{kang2018self}  & 
         0.25595 & 0.16131 && 
         0.35455 & 0.27070 && 
         0.50433 & 0.34326 \\
         \textbf{SASREC+}~\cite{xu2019self}  & 
         0.25820 & 0.16204 && 
         0.35690 & 0.27148 && 
         0.50607 & 0.34328 \\
         \textbf{SASREC w/ ours.}  &
         \textbf{0.26545} & \textbf{0.16529} && 
         \textbf{0.36235} & \textbf{0.28083} && 
         \textbf{0.51347} & \textbf{0.35811} \\\bottomrule
    \end{tabular}\end{center}
\end{table*}
This section evaluates the quantitative and qualitative results of our self-modulating attention against many state-of-the-art baselines on three large-scale real-world datasets. All experiments are run on a machine with E5-2678 CPU, RTX 2080 and 188G RAM. 

\subsection{Benchmark Datasets}
We use three real-world datasets which are processed in line with  \cite{weimer2007cofirank,liang2018variational,steck2019markov}: 
(1) Amazon Electronics\footnote{http://jmcauley.ucsd.edu/data/amazon/} is a user review dataset, where we binarize the explicit data and only keep users and items who have at least 5 historical records; 
(2) Koubei\footnote{https://tianchi.aliyun.com/dataset/dataDetail?dataId=53} is a user behavior dataset, where we keep users with at least 5 records and items that have been purchased by at least 100 users;  
and (3) Tmall\footnote{https://tianchi.aliyun.com/dataset/dataDetail?dataId=35680} is a user click dataset, where we keep users who click at least 10 items and items which have been seen by at least 200 users. Table~\ref{tbl:stats} shows the statistics.

\subsection{Evaluation Protocol}
We study the performance using strong generalization protocol \cite{weimer2007cofirank,liang2018variational,steck2019markov}, where the models are evaluated on the users that are not present at training time. To do so, we split the users into training/validation/test sets with the ratio $8:1:1$. Then, we use all the data from the training users to optimize the model parameters, whereas for validation/test users, we keep the last record of each user to test and the rest to learn necessary representations for the model. In addition, we select model hyper-parameters and architectures using the validation users, and we report the results on test users for the model that achieves the best results on the validation users.

Two ranking-based metrics are used: Hit Rate (HR) and normalized discounted cumulative gain (NDCG). Notably, in our experiment protocol, the metrics Precision and Recall have the same tendency with Hit Rate. In the following, we report HR@10 and NDCG@10 for all the models.

\subsection{Comparing Methods}
We compare our proposed attention to state-of-the-art baselines, where we select the continuous time regularization parameter over \{1e-4, 1e-5, 1e-6\}. For the sake of fairness, the embedding size of all attentions are set to 50. In addition, we summarize the implementation details of the compared baselines as follows:
\begin{itemize}[topsep=0in,leftmargin=0em,wide=0em,parsep=0em]
    \item\textbf{GRU4REC}~\cite{HidasiICLR16} is one of state-the-of-art sequential recommendation approaches, which relies on GRU module \cite{chung2014empirical} to capture the sequential patterns in user preferences. We use the program provided by the author\footnote{https://github.com/hidasib/GRU4Rec}. Grid search of factor size over \{50, 100, ..., 300\}, learning rate over \{1e-4, 1e-3, 1e-2\} and dropout rate over \{0.1, ... ,0.5\} is performed to select the best parameters.
    
    \item \textbf{DIN}~\cite{zhou2018deep} is among the best sequential recommendation models that build upon attentions. The implementation is public and provided by the authors\footnote{https://github.com/zhougr1993/DeepInterestNetwork}. We finetune the model by grid search learning rate and dropout rate over \{0.5, 1.0, 1.5, 2.0\} and \{0.1, ... ,0.5\}.
    
    \item \textbf{SHAN}~\cite{ying2018sequential} proposes hierarchical attentions to model multi-level user representations. The source code is also provided by the authors\footnote{https://github.com/uctoronto/SHAN}. After grid search, we find the parameter settings in the original paper provide the best results in our experiments.
    
    \item \textbf{SASREC}~\cite{kang2018self} applies the Transformer~\cite{vaswani2017attention} into the sequential recommendation task and has achieved state-of-the-art performance in many benchmark datasets. We use the software provided by the authors in our experiments\footnote{https://github.com/kang205/SASRec}. We adopt the default architecture used in its original paper where three transformer blocks are used. We further search the optimal hyper-parameters by ranging learning rate over \{1e-4, 5e-4, ..., 1e-2\} and dropout rate from 0.1 to 0.5.
    
    \item \textbf{SASREC+}~\cite{xu2019self} equips SASREC with time embedding\footnote{https://github.com/StatsDLMathsRecomSys/Self-attention-with-Functional-Time-Representation-Learning}, analogous to positional embedding. This approach models continuous-time dynamics in the form of random Fourier features. In our experiments, we concatenate both time and positional embedding with the attention input embedding for the improved model accuracy. We emphasize the comparison to SASREC+, since it is probably most closely related to our work.
\end{itemize}

\subsection{Ablation Study}
We evaluate each of the two components of self-modulating attention in continuous time space, namely continuous time regularization (CTReg) and self-modulating layer (SMLayer). Ideally, our attention can be used to improve any attention mechanisms by simply stacking SMLayer and adjusting loss function with CTReg. To demonstrate the adaptability of our proposed attention, we equip DIN and SASREC progressively with SMLayer and CTReg.

Table~\ref{tab:cmp_ac} reports the results of HR@10 and NDCG@10 on the Amazon, Koubei and Tmall datasets. We can see that SMLayer achieves the consistent improvement in both HR@10 and NDCG@10 over both DIN and SASREC across all the three datasets. Especially, SMLayer helps to improve HR@10 and NDCG@10 of DIN by a margin greater than 0.01. On the other hand, CTReg can also boost the performance in most cases. One exception is on the Amazon dataset. The reason is perhaps due to the fact that Amazon is made of user reviews, and there are nearly 5.39\% users wrote the reviews of all the purchased items at the same time. Meanwhile, with using CTReg, the conditional intensity has to be enforced to fit such nuisance patterns. This will inevitably make the optimization procedure prone to overfitting, leading to the degradation of the performance.

The above experiments indicate that both the continuous time self-modulating layer and regularization are flexible and have the ability to improve the model performance. In our analysis, the reasons why our proposed attentions can further improve the recommendation accuracy is mainly due to (1) continuous time self-modulating layer models the impact of both sequential positions and continuous timestamps, so that the model expressiveness is improved; and (2) continuous time regularization enforces the intensity function dependent on the sequence timestamps, so that the intensities can accurately modulate historical behaviors to make attentive contributions to the predictions of the model.

\begin{figure}[tb!]
\begin{center}
    \centerline{\includegraphics[width=.5\textwidth]{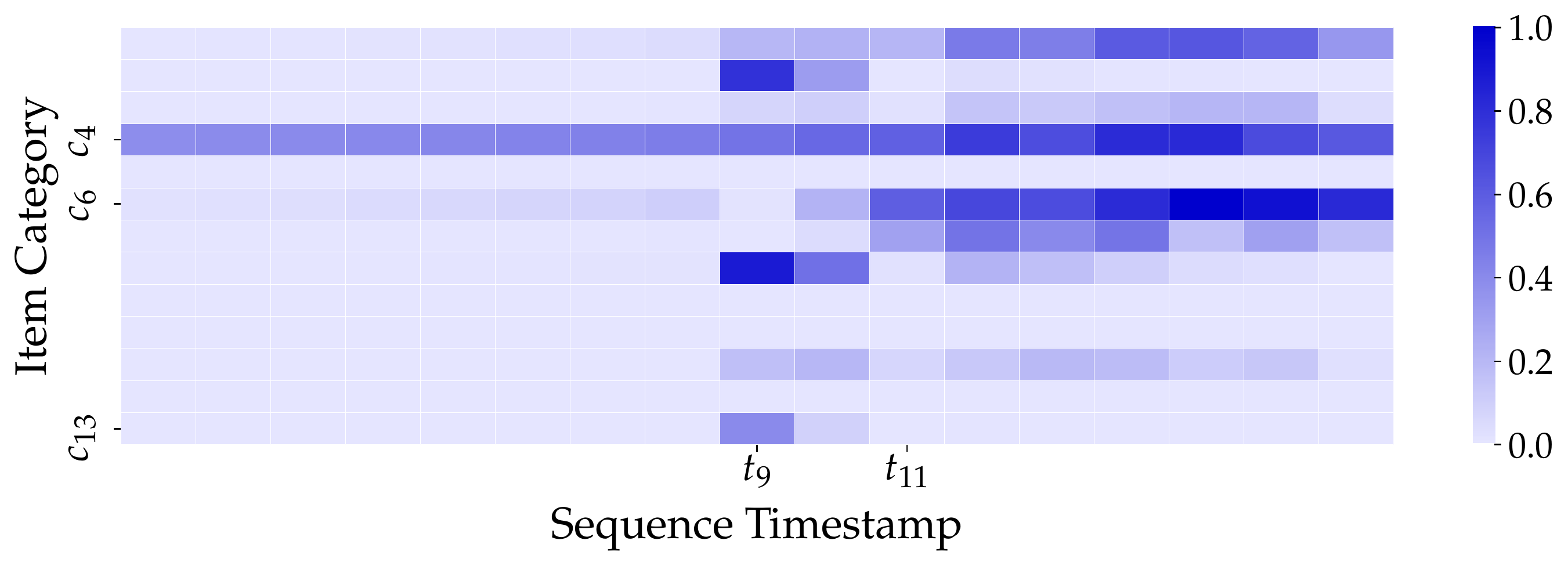}}
    \caption{\label{fig:qual_intens}Example of user preference intensities on  Koubei. The user usually purchases the items from category $c_4$ for all the time, and he/she starts to repeatedly buy the items from  category $c_6$ after time $t_9$. The darker color corresponds to the higher intensity.}
\end{center}
\end{figure}

\subsection{Quantitative Study}
This study evaluates the accuracy of the proposed method by comparing it with five state-of-the-art methods summarized previously, i.e., SHAN~\cite{ying2018sequential}, DIN~\cite{zhou2018deep}, GRU4REC~\cite{HidasiICLR16}, SASREC~\cite{kang2018self} and SASREC+~\cite{xu2019self}. Each of the methods is configured by using default parameters provided in the original paper or optimal parameters which produce the best results in grid search. We note again that for the sake of fairness, the embedding size $d$ of query, key, value matrix is set to $50$ for all attentions.

Table~\ref{tab:cmp_ac} presents HR@10 and NDCG@10 of all models on the Amazon, Koubei and Tmall datasets. This study shows that the SASREC method with our self-modulating attention outperforms on all the datasets. Note that DIN and SHAN perform worse than the results reported in their origin paper. The difference is caused by the experiment protocol. We evaluate the models on the users that are not present at training time. This setup is more difficult than that where test users and training users are overlapped.

More importantly, we emphasize the comparison between our method and SASREC+ which employs functional time embedding~\cite{xu2019self} to capture temporal dynamics. The empirical results demonstrate the superiority of our self-modulating attention. We believe this is owed to the design of intensity function, powered by neural networks, which is expressive enough to model complex evolving dynamics in the continuous time space.

\subsection{Qualitative Study}
We also qualitatively analyze the self-modulating attention outputs. Fig. \ref{fig:qual_intens} shows the intensities of a user on Koubei evolve alone the time. We note that the reported user usually purchases the items from the category $c_4$ for all the time, start to repeatedly buy the items from category $c_6$ after $t_9$.

At the first glimpse, the intensities are relatively sparse and most of them are less than one. We recall that intensities less than one will attenuate the corresponding past behavior making contributions to the output ranking score, whereas intensities equal to zero can help to derive sparse interactions between the past behaviors and future actions. As demonstrated in Theorem \ref{thm:bound}, this makes potentials for the improved generalization capacity. 

One can see that for category $c_4$, intensity function assigns larger weights to more recent behaviors. This coincides with a common experience that more recent behaviors have more impact on the future actions. On the other hand, the intensities for category $c_{13}$ remain zero for a long time, then increase to nearly 0.4 during $(t_9, t_{11})$, and finally diminish back to zero after time $t_{11}$. This demonstrates the our self-modulating attention takes the timestamps into account and has the ability to cancel the impact of the items that the user might not be interested in during certain time interval.

\section{Conclusion}
We have proposed a novel attention mechanism in continuous time space, entitled self-modulating attention, to model the non-linearly evolving dynamics, especially for user preferences. 
The key contribution lies in the generalized formulation of conventional attentions, and the development of self-modulating layer and continuous time regularization. We also provide theoretical generalization bound showing the advantage of our method.
Extensive experiments on the top-N recommendation task have demonstrated the effectiveness and flexibility of our proposed self-modulating attention mechanism in continuous time space.

\section*{Acknowledgements}
This research was supported by Shanghai Municipal Science and Technology Major Project (2021SHZDZX0102), NSFC (U19B2035, 61972250, 72061127003), and Meituan.

\bibliography{main}
\bibliographystyle{icml2021}

\appendix

\section{Proofs of Theoretical Results}
\begin{theorem}[Hoeffding's Inequality]\label{thm:hoeffding}
If $X$ is a random variable with $\mathbb{E} [X]=0$ and $X \in [a, b]$, then it holds that for any real $s>0$,
\begin{flalign}
    \mathbb{E}[e^{sX}] \le e^{s^2(b-a)^2/8}.
\end{flalign}
\end{theorem}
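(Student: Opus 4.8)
The plan is to prove this moment generating function bound --- the classical Hoeffding lemma --- by combining the convexity of the exponential with a second-order Taylor expansion. First I would exploit the convexity of $x \mapsto e^{sx}$ on the interval $[a,b]$: writing any $x \in [a,b]$ as the convex combination $x = \frac{b-x}{b-a}\,a + \frac{x-a}{b-a}\,b$, convexity yields the pointwise upper bound $e^{sx} \le \frac{b-x}{b-a}e^{sa} + \frac{x-a}{b-a}e^{sb}$. Taking expectations of both sides and invoking the hypothesis $\mathbb{E}[X]=0$ annihilates the terms linear in $X$, leaving $\mathbb{E}[e^{sX}] \le \frac{b}{b-a}e^{sa} - \frac{a}{b-a}e^{sb}$.

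The next step is to recast this deterministic upper bound in a form amenable to analysis. Setting $\theta = \frac{-a}{b-a} \in [0,1]$ and $u = s(b-a) > 0$, I would factor out $e^{sa} = e^{-\theta u}$ to write the right-hand side as $\exp(\varphi(u))$, where $\varphi(u) = -\theta u + \log\bigl(1 - \theta + \theta e^{u}\bigr)$. The problem then reduces to the single scalar inequality $\varphi(u) \le u^2/8$, since this immediately gives $\mathbb{E}[e^{sX}] \le e^{u^2/8} = e^{s^2(b-a)^2/8}$, which is the claimed bound.

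To establish $\varphi(u) \le u^2/8$ I would use Taylor's theorem with Lagrange remainder. A direct computation gives $\varphi(0)=0$ and $\varphi'(0) = -\theta + \frac{\theta}{1} = 0$, so for some $\xi$ between $0$ and $u$ we have $\varphi(u) = \tfrac{1}{2}u^2\,\varphi''(\xi)$. The crux is the bound on the second derivative: differentiating once more yields $\varphi''(u) = p(1-p)$ with $p = \frac{\theta e^{u}}{1-\theta+\theta e^{u}} \in [0,1]$. I expect this identification to be the main obstacle --- not because the algebra is heavy, but because recognizing that $\varphi''$ is exactly the variance of a Bernoulli$(p)$ random variable is precisely what makes the constant $1/8$ sharp. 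Since $p(1-p) \le 1/4$ uniformly over $p \in [0,1]$, we obtain $\varphi''(\xi) \le 1/4$ and hence $\varphi(u) \le u^2/8$, completing the argument. I would remark that the hypothesis $\mathbb{E}[X]=0$ is used in exactly one place (killing the linear term), and the boundedness $X \in [a,b]$ is used in exactly one place (the convexity step), so no further integrability assumptions are needed.
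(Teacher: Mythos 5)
Your proof is correct, and it is the classical argument for Hoeffding's lemma: convexity of $x\mapsto e^{sx}$ to get the two-point bound, the mean-zero hypothesis to kill the linear term, the substitution $\varphi(u)=-\theta u+\log(1-\theta+\theta e^{u})$, and the identification $\varphi''=p(1-p)\le 1/4$ with Taylor--Lagrange to get the constant $1/8$. Note that the paper itself gives no proof of this statement at all---it is quoted in the appendix as a known auxiliary result, used only to bound the moment generating function of $\sigma_{u,i}\mathbf{P}_{u,k}(\mathbf{VB})_{k,i}$ inside the proof of the Rademacher complexity lemma---so your write-up supplies an argument the paper omits rather than deviating from one it contains. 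Two small points you should make explicit: your substitution $\theta=-a/(b-a)\in[0,1]$ silently uses $a\le 0\le b$, which does follow from the hypotheses (a mean-zero variable supported in $[a,b]$ forces $a\le 0\le b$), and the degenerate case $a=b$ (where $X\equiv 0$ and the bound is trivial) needs to be dispatched separately since you divide by $b-a$. With those two remarks added, the argument is complete.
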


\begin{lemma}[Empirical Rademacher Complexity]\label{thm:radmchr}
Given a function class $\mathcal{F} = \{\mathbf{PVB}\in\mathbb{R}^{m\times{n}} \mid \mathbf{P}\in\mathbb{R}^{m\times{n}}, \mathbf{V}\in\mathbb{R}^{n\times{d}},
\mathbf{B}\in\mathbb{R}^{d\times{n}}\}$ and a set of examples $\Omega$ drawn i.i.d from the distribution $p(D)$, let $\rho|\Omega|=\sup_{\mathbf{P}\in\mathcal{F}} \sum_{(u,i)\in\Omega}\parallel\mathbf{P}_{i, *}\parallel_0$ and $\mu=\sup_{u,i,k} \mid\mathbf{P}_{u,k} (\mathbf{VB})_{k,j}\mid$, then the empirical Rademacher complexity satisfies
\begin{align}
    \hat{\mathcal{R}}_{|\Omega|}(\mathcal{F})
    \le \mu\sqrt{\frac{2\rho\ln|\mathcal{F}|}{|\Omega|}}.
\end{align}
\end{lemma}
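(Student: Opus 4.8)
The plan is to treat $\mathcal{F}$ as a finite set of matrices --- its cardinality $|\mathcal{F}|$ is exactly the quantity that surfaces in the bound --- and to apply a finite-class maximal inequality of Massart type, which I would derive directly from the Hoeffding bound of Theorem \ref{thm:hoeffding}. Writing $\sigma = (\sigma_{u,i})_{(u,i)\in\Omega}$ for a vector of i.i.d. Rademacher signs and, for each $f = \mathbf{PVB}\in\mathcal{F}$, letting $f_\Omega = (f_{u,i})_{(u,i)\in\Omega}$ be its restriction to the sample, the empirical Rademacher complexity is
\begin{equation}
\hat{\mathcal{R}}_{|\Omega|}(\mathcal{F}) = \frac{1}{|\Omega|}\,\mathbb{E}_\sigma\Big[\sup_{f\in\mathcal{F}} \langle \sigma, f_\Omega\rangle\Big].
\end{equation}
First I would control this expected supremum by the standard exponential-moment argument: for any $s>0$, Jensen's inequality gives $\exp\!\big(s\,\mathbb{E}_\sigma[\sup_f\langle\sigma,f_\Omega\rangle]\big) \le \sum_{f\in\mathcal{F}} \mathbb{E}_\sigma[\exp(s\langle\sigma,f_\Omega\rangle)]$. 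Each coordinate $\sigma_{u,i}(f_\Omega)_{u,i}$ is mean-zero and supported on an interval of length $2|(f_\Omega)_{u,i}|$, so Theorem \ref{thm:hoeffding} bounds each factor and, by independence across coordinates, yields $\mathbb{E}_\sigma[\exp(s\langle\sigma,f_\Omega\rangle)] \le \exp(s^2\|f_\Omega\|_2^2/2)$. Taking logarithms and optimizing over $s$ then produces the familiar estimate $\mathbb{E}_\sigma[\sup_f\langle\sigma,f_\Omega\rangle] \le R\sqrt{2\ln|\mathcal{F}|}$, where $R = \sup_{f\in\mathcal{F}}\|f_\Omega\|_2$ is the $\ell_2$ radius of the sampled class.

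The remaining and genuinely load-bearing step is to show $R \le \mu\sqrt{\rho|\Omega|}$, since dividing the previous estimate by $|\Omega|$ and substituting this radius reproduces exactly the claimed $\mu\sqrt{2\rho\ln|\mathcal{F}|/|\Omega|}$. Here I would expand each sampled entry as $(\mathbf{PVB})_{u,i} = \sum_k \mathbf{P}_{u,k}(\mathbf{VB})_{k,i}$ and observe that only the nonzero columns of the row $\mathbf{P}_{u,*}$ contribute, so the sum runs over at most $\|\mathbf{P}_{u,*}\|_0$ indices, each product of magnitude at most $\mu$ by the definition of $\mu$. Bounding $(\mathbf{PVB})_{u,i}^2$ in terms of that row sparsity and summing over $(u,i)\in\Omega$ should collapse, through the definition $\rho|\Omega| = \sup_{\mathbf{P}}\sum_{(u,i)\in\Omega}\|\mathbf{P}_{u,*}\|_0$, to $\|f_\Omega\|_2^2 \le \mu^2\rho|\Omega|$, which is the desired radius bound.

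I expect this radius estimate to be the main obstacle, and the delicate point is the bookkeeping of the sparsity parameter: one must match the per-entry count of contributing products against the per-term magnitude bound $\mu$ so that the aggregate comes out proportional to $\rho|\Omega|$, i.e. to the \emph{first} power of the $\ell_0$ counts, rather than to $\sum_u\|\mathbf{P}_{u,*}\|_0^2$. The tool I would use to force the correct power is Cauchy--Schwarz applied over the support of each row before aggregating over the sample, trading one factor of $\sqrt{\|\mathbf{P}_{u,*}\|_0}$ against the $\ell_2$ summation so that the two appearances of the sparsity count combine into precisely $\rho|\Omega|$; organizing this accounting carefully (and reconciling the index conventions in the definitions of $\rho$ and $\mu$) is where the real care lies. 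Everything else --- the Hoeffding step and the optimization over $s$ --- is routine once the radius is in hand.
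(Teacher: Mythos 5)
Your overall skeleton --- a Massart-type finite-class bound obtained from exponential moments, followed by an $\ell_2$-radius estimate --- is carried out correctly in its first step: independence across the coordinates $(u,i)$ gives $\mathbb{E}_\sigma[\exp(s\langle\sigma,f_\Omega\rangle)] \le \exp(s^2\|f_\Omega\|_2^2/2)$, and optimizing over $s$ yields $\mathbb{E}_\sigma[\sup_f\langle\sigma,f_\Omega\rangle]\le R\sqrt{2\ln|\mathcal{F}|}$. The gap is exactly where you feared it would be: the load-bearing radius bound $R\le\mu\sqrt{\rho|\Omega|}$ is false in general, and Cauchy--Schwarz cannot repair it. Writing $n_u=\|\mathbf{P}_{u,*}\|_0$, Cauchy--Schwarz over the support gives $\bigl(\sum_k\mathbf{P}_{u,k}(\mathbf{VB})_{k,i}\bigr)^2\le n_u\sum_k\bigl(\mathbf{P}_{u,k}(\mathbf{VB})_{k,i}\bigr)^2\le n_u^2\mu^2$: the sparsity still enters squared, and this is sharp --- take all $n_u$ products equal to $\mu$ with a common sign (realizable in $\mathcal{F}$), so that $(f_\Omega)_{u,i}^2=n_u^2\mu^2$ exactly. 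The best available radius is therefore $\|f_\Omega\|_2^2\le\mu^2\sum_{(u,i)\in\Omega}n_u^2\le\mu^2(\max_u n_u)\,\rho|\Omega|$, and your route only yields $\hat{\mathcal{R}}_{|\Omega|}(\mathcal{F})\le\mu\sqrt{2(\max_u n_u)\rho\ln|\mathcal{F}|/|\Omega|}$, weaker than the lemma by a factor of $\sqrt{\max_u n_u}$. No bookkeeping trades the power of the $\ell_0$ count down from two to one, because the inequality you would need, $(\sum_{k\in S}a_k)^2\le|S|\mu^2$ for $|a_k|\le\mu$, is simply wrong once $|S|>1$.

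It is worth seeing how the paper gets the first power of $\rho$, because it uses a device that your (more careful) per-coordinate argument refuses to use: after the union bound over $\mathcal{F}$, the paper expands the inner product into triples $(u,i,k)$, inserts support indicators $\delta_{u,k}$, and factors the moment generating function across \emph{all triples}, citing ``the assumption of (linear) independence''. Terms sharing the same $(u,i)$ also share the Rademacher sign $\sigma_{u,i}$, so they are not independent; indeed, for identical summands the factorization inequality runs the other way, $\mathbb{E}[e^{2X}]\ge(\mathbb{E}[e^X])^2$, so the paper's step (d) is a genuine additional hypothesis rather than a consequence of the setup. Under that hypothesis each nonzero triple contributes a Hoeffding factor $\exp(s^2(2\mu)^2/8)$, there are at most $\rho|\Omega|$ such triples, and the linear-in-$\rho$ exponent $s^2\mu^2\rho|\Omega|/2$ follows. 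So your proposal does not prove the lemma as stated, but the failure is informative: the bound is not reachable by the standard Massart argument with the honest $\ell_2$ radius, and the paper's own proof reaches it only by assuming an independence property across the expansion terms that is absent from the lemma's statement.
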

\begin{proof}
Let $\{\sigma_{u,i}\}$ denote the Rademacher variables which are independent random variables uniformly chosen from $\{-1, 1\}$, and $\{\delta_{u,k}\}$ denote the indicator variables of which each equals to one if $\mathbf{P}_{u,k}$ is non-zero and equals to zero otherwise. 

To set up the use of Hoeffding's Inequality as in Theorem \ref{thm:hoeffding}, we start by taking the exponential of the empirical Rademacher complexity multiplied by some positive and real constant $s$:
\begin{flalign}
&\exp\Bigg( 
    s \mathbb{E}_\sigma 
        \Bigg[ 
            \sup_{(\mathbf{P},\mathbf{V},\mathbf{B})\in\mathcal{F}} 
                \sum_{(u,i)\in\Omega} 
                \sigma_{u,i} \mathbf{P}_{u,*} (\mathbf{VB})_{*, i}
        \Bigg] 
    \Bigg) \nonumber\\
&\stackrel{(a)}{\le} \mathbb{E}_\sigma\Bigg[ 
    \exp\Bigg( 
            \sup_{(\mathbf{P},\mathbf{V},\mathbf{B})\in\mathcal{F}} 
                \sum_{(u,i)\in\Omega} 
                s\sigma_{u,i} \mathbf{P}_{u,*} (\mathbf{VB})_{*, i}
        \Bigg)
    \Bigg] \nonumber\\
&\stackrel{(b)}{=} \mathbb{E}_\sigma\Bigg[ 
    \sup_{(\mathbf{P},\mathbf{V},\mathbf{B})\in\mathcal{F}} 
    \exp\Bigg( 
                \sum_{(u,i)\in\Omega} 
                s\sigma_{u,i} \mathbf{P}_{u,*} (\mathbf{VB})_{*, i}
        \Bigg)
    \Bigg] \nonumber\\
&\stackrel{(c)}{\le} \sum_{(\mathbf{P},\mathbf{V},\mathbf{B})\in\mathcal{F}} 
    \mathbb{E}_\sigma\Bigg[ 
    \exp\Bigg( 
                \sum_{(u,i)\in\Omega} 
                s\sigma_{u,i} \mathbf{P}_{u,*} (\mathbf{VB})_{*, i}
        \Bigg)
    \Bigg] \nonumber \\
&=  \sum_{(\mathbf{P},\mathbf{V},\mathbf{B})\in\mathcal{F}} 
    \mathbb{E}_\sigma\Bigg[ 
    \exp\Bigg( 
                \sum_{(u,i)\in\Omega} 
                s\sigma_{u,i} 
                \sum_k \delta_{u,k}\mathbf{P}_{u,k} (\mathbf{VB})_{k, i}
        \Bigg)
    \Bigg] \nonumber \\
&= \sum_{(\mathbf{P},\mathbf{V},\mathbf{B})\in\mathcal{F}} 
    \mathbb{E}_\sigma\Bigg[ 
    \exp\Bigg( 
                \sum_{u,i,k}
                s\sigma_{u,i} \delta_{u,k}
                 \mathbf{P}_{u,k} (\mathbf{VB})_{k, i}
        \Bigg)
    \Bigg] \nonumber \\
&\stackrel{(d)}{\le} \sum_{(\mathbf{P},\mathbf{V},\mathbf{B})\in\mathcal{F}} 
    \prod_{u,i,k} 
    \mathbb{E}_\sigma\Bigg[ 
    \exp\Bigg( 
                s\sigma_{u,i} \delta_{u,k}
                 \mathbf{P}_{u,k} (\mathbf{VB})_{k, i}
        \Bigg)
    \Bigg]. \nonumber
\end{flalign}where (a) holds due to the Jensen's Inequality; (b) holds due to monotonically increasing exponential function; (c) holds due to the nonnegativity of exponentials; and (d) holds due to the assumption of (linear) independence.

Next we apply Hoeffding's Inequality
\begin{flalign}
&\exp\Bigg( 
    s \mathbb{E}_\sigma 
        \Bigg[ 
            \sup_{(\mathbf{P},\mathbf{V},\mathbf{B})\in\mathcal{F}} 
                \sum_{(u,i)\in\Omega} 
                \sigma_{u,i} \mathbf{P}_{u,*} (\mathbf{VB})_{*, i}
        \Bigg] 
    \Bigg) \nonumber\\
&\stackrel{(a)}{\le} \sum_{(\mathbf{P},\mathbf{V},\mathbf{B})\in\mathcal{F}} 
    \prod_{u,i,d} 
    \exp\Bigg( 
                \delta_{u,k}s^2(2\mu)^2 / 8
        \Bigg) \nonumber\\
&\stackrel{(b)}{\le} \sum_{(\mathbf{P},\mathbf{V},\mathbf{B})\in\mathcal{F}}             \exp\Bigg( 
        \rho|\Omega|(s\mu)^2/ 2
    \Bigg) \nonumber\\
&=   |\mathcal{F}| \exp\Bigg( 
        {s}^2\mu^2\rho|\Omega|/ 2
    \Bigg) \nonumber
\end{flalign}where (a) holds due to $\mathbb{E} [\sigma_{u,i}\mathbf{P}_{u,k} (\mathbf{VB})_{k, i}] = 0$ and $\sigma_{u,i}\mathbf{P}_{u,k} (\mathbf{VB})_{k, i} \in [-\mu. \mu]$; and (b) holds due to that the number of nonzero entries is upper bounded by $\rho|\Omega|$.

Taking the log of both sides then dividing both sides by $s$, we have
\begin{flalign}
& \mathbb{E}_\sigma 
        \Bigg[ 
            \sup_{(\mathbf{P},\mathbf{V},\mathbf{B})\in\mathcal{F}} 
                \sum_{(u,i)\in\Omega} 
                \sigma_{u,i} \mathbf{P}_{u,*} (\mathbf{VB})_{*, i}
        \Bigg] \nonumber\\
&\le \frac{\ln|\mathcal{F}|}{s} + \frac{\mu^2\rho|\Omega|}{2} \nonumber
\end{flalign}

Note that the right hand is a function of $s$, which can be minimized by finding its derivative and setting it equal to zero. We figure out the optimum $s^\ast = \sqrt{2\ln|\mathcal{F}|/(\mu^2\rho|\Omega|)}$. After substituting this quantity back to the previous bound, we have
\begin{flalign}
&\mathbb{E}_\sigma 
        \Bigg[ 
            \sup_{(\mathbf{P},\mathbf{V},\mathbf{B})\in\mathcal{F}} 
                \sum_{(u,i)\in\Omega} 
                \sigma_{u,i} \mathbf{P}_{u,*} (\mathbf{VB})_{*, i}
        \Bigg] \nonumber\\
&\le \mu\sqrt{2\rho|\Omega|\ln|\mathcal{F}|} \nonumber
\end{flalign}

Dividing both sides by $|\Omega|$ yields the result stated in the lemma.
\end{proof}

\begin{theorem}[Generalization Bound~\cite{mohri2012foundations}]\label{thm:mohri}Given a random sample $\Omega$ drawn from $\mathbf{R}\in\mathbb{R}^{m\times{n}}$ and the loss function $\ell$ is $L$-Lipschitz, it holds that for any $\hat{\mathbf{R}}\in\mathcal{F}$
\begin{align}
    \mathbb{E} \Big[
        \ell(\mathbf{R}, \hat{\mathbf{R}}) \Big] \le \mathbb{E}_\Omega &\Big[
        \ell(\mathbf{R}, \hat{\mathbf{R}}) \Big] 
        + 2L\cdot\hat{\mathcal{R}}_{|\Omega|}(\mathcal{F}) \nonumber\\
        &+ \mathcal{O}(\sqrt{\frac{\ln(1/\delta)}{|\Omega|}})
\end{align}with probability of at least $1-\delta$.
\end{theorem}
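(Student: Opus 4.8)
The plan is to prove Theorem~\ref{thm:mohri} by the classical uniform-convergence template, since it is precisely the standard Rademacher-complexity generalization bound. Throughout I write $\Omega$ for the i.i.d. sample of size $|\Omega|$ and abbreviate $g(\hat{\mathbf{R}})=\ell(\mathbf{R},\hat{\mathbf{R}})$. The object to control is the uniform deviation
\[
\Phi(\Omega)=\sup_{\hat{\mathbf{R}}\in\mathcal{F}}\Big(\mathbb{E}[g(\hat{\mathbf{R}})]-\mathbb{E}_\Omega[g(\hat{\mathbf{R}})]\Big),
\]
so that controlling $\Phi(\Omega)$ uniformly over $\mathcal{F}$ immediately yields the stated inequality for every $\hat{\mathbf{R}}\in\mathcal{F}$, because $\Phi(\Omega)$ upper bounds each individual gap $\mathbb{E}[g(\hat{\mathbf{R}})]-\mathbb{E}_\Omega[g(\hat{\mathbf{R}})]$.

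First I would establish concentration of $\Phi$. Since $\ell$ is $L$-Lipschitz and the predictions and targets range over a bounded set, $g$ is bounded, so replacing a single example in $\Omega$ perturbs $\Phi(\Omega)$ by $\mathcal{O}(1/|\Omega|)$. McDiarmid's bounded-differences inequality (which rests on the same moment control as Theorem~\ref{thm:hoeffding}, via the Azuma--Hoeffding martingale argument) then gives, with probability at least $1-\delta$,
\[
\Phi(\Omega)\le \mathbb{E}_\Omega[\Phi(\Omega)]+\mathcal{O}\!\left(\sqrt{\tfrac{\ln(1/\delta)}{|\Omega|}}\right).
\]

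Next I would bound $\mathbb{E}_\Omega[\Phi(\Omega)]$ by the Rademacher complexity of the composed loss class. Introducing an independent ghost sample together with Rademacher signs, the standard symmetrization argument yields $\mathbb{E}_\Omega[\Phi(\Omega)]\le 2\,\mathbb{E}_\Omega[\hat{\mathcal{R}}_{|\Omega|}(\ell\circ\mathcal{F})]$. Then, because $\ell(\mathbf{R},\cdot)$ is $L$-Lipschitz, Talagrand's contraction (Ledoux--Talagrand) lemma strips the loss at the cost of a factor $L$, giving $\hat{\mathcal{R}}_{|\Omega|}(\ell\circ\mathcal{F})\le L\,\hat{\mathcal{R}}_{|\Omega|}(\mathcal{F})$. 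This already produces the leading $2L\,\hat{\mathcal{R}}_{|\Omega|}(\mathcal{F})$ term, except that it still appears in expectation rather than on the realized sample.

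The final step, and the one I expect to require the most care, is to pass from $\mathbb{E}_\Omega[\hat{\mathcal{R}}_{|\Omega|}(\mathcal{F})]$ to the empirical $\hat{\mathcal{R}}_{|\Omega|}(\mathcal{F})$ that actually appears in the statement. The empirical Rademacher complexity is itself a bounded-differences function of $\Omega$ with single-example sensitivity $\mathcal{O}(1/|\Omega|)$, so a second application of McDiarmid replaces its expectation by its sample value up to an extra $\mathcal{O}(\sqrt{\ln(1/\delta)/|\Omega|})$, which merges with the earlier residual. Collecting the three contributions gives $\mathbb{E}[g]\le \mathbb{E}_\Omega[g]+2L\,\hat{\mathcal{R}}_{|\Omega|}(\mathcal{F})+\mathcal{O}(\sqrt{\ln(1/\delta)/|\Omega|})$ uniformly over $\mathcal{F}$, as claimed. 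The two genuine obstacles are (i) verifying boundedness of $g$ so that McDiarmid applies, which needs the ranges of $\hat{\mathbf{R}}$ and $\mathbf{R}$ to be controlled, and (ii) the symmetrization and contraction chain, whose constants are exactly what generate the factor $2L$; the remainder is routine bookkeeping of the concentration tails.
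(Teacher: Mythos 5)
The paper does not actually prove Theorem~\ref{thm:mohri} at all: it imports the result verbatim from the cited textbook \cite{mohri2012foundations} and uses it as a black box together with Lemma~\ref{thm:radmchr}, and your reconstruction---the uniform deviation $\Phi(\Omega)$, McDiarmid's inequality, symmetrization via a ghost sample, Talagrand contraction to strip the $L$-Lipschitz loss, and a second McDiarmid application to replace the expected Rademacher complexity by the empirical one---is precisely the standard proof given in that reference, and it is correct. The one caveat, which you rightly flag yourself, is that both McDiarmid steps need the composed loss to be bounded; this hypothesis is implicit rather than explicit in the paper's statement, and in its setting it follows from the boundedness of the entries of $\mathbf{R}$ and of $\hat{\mathbf{R}}=\mathbf{PVB}$ (the paper's $\mu$).
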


\begin{theorem}[Generalization Bound]\label{thm:bound}Suppose that the loss function $\ell$ is $L$-Lipschitz, and for the estimate $\hat{\mathbf{R}}$ on an random example set $\Omega$ we bound $\rho|\Omega|=\sup_{\mathbf{P}\in\mathcal{F}} \sum_{(u,i)\in\Omega}\parallel\mathbf{P}_{i, *}\parallel_0$ and $\mu=\sup_{(u,i,k)\in\Omega} |\mathbf{P}_{u,k} (\mathbf{VB})_{k,j}|$, then with probability at least $1 - \delta$
\begin{align}
    \mathbb{E} \Big[
        \ell(\mathbf{R}, \hat{\mathbf{R}}) \Big]
    \le \mathbb{E}_\Omega &\Big[
        \ell(\mathbf{R}, \hat{\mathbf{R}}) \Big] 
    + \mathcal{O} \Big(
        L\mu\sqrt{\frac{C\rho\ln|\Omega|}{|\Omega|}} \nonumber\\
    &+ \sqrt{\frac{\ln(1/\delta)}{|\Omega|}}
    \Big)
\end{align}where $C = d(m+n)\log(48emn)$.
\end{theorem}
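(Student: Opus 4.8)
The plan is to chain the abstract Rademacher generalization bound of Theorem~\ref{thm:mohri} with the class-specific complexity estimate of Lemma~\ref{thm:radmchr}; the one genuinely missing ingredient is a finite surrogate for $|\mathcal{F}|$, because $\mathcal{F}=\{\mathbf{PVB}\}$ is a continuously parametrized family whereas Lemma~\ref{thm:radmchr} (a Massart-type finite-class estimate driven by Hoeffding's inequality, Theorem~\ref{thm:hoeffding}) is literally meaningful only for a finite $\mathcal{F}$.

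First I would apply Theorem~\ref{thm:mohri} verbatim, which yields
\[
\mathbb{E}\big[\ell(\mathbf{R},\hat{\mathbf{R}})\big]
 \le \mathbb{E}_\Omega\big[\ell(\mathbf{R},\hat{\mathbf{R}})\big]
 + 2L\,\hat{\mathcal{R}}_{|\Omega|}(\mathcal{F})
 + \mathcal{O}\!\left(\sqrt{\tfrac{\ln(1/\delta)}{|\Omega|}}\right).
\]
This already delivers the $\mathcal{O}(\sqrt{\ln(1/\delta)/|\Omega|})$ summand, so the whole task reduces to proving $2L\,\hat{\mathcal{R}}_{|\Omega|}(\mathcal{F}) = \mathcal{O}\big(L\mu\sqrt{C\rho\ln|\Omega|/|\Omega|}\big)$.

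To do so I would replace $\mathcal{F}$ by a minimal $\epsilon$-net $\mathcal{F}_\epsilon$ at resolution $\epsilon=1/|\Omega|$, measured entrywise on the coordinates indexed by $\Omega$. Since $\mathbf{VB}$ has rank at most $d$, the product $\mathbf{PVB}$ lives on a low-rank matrix manifold parametrized by $\mathcal{O}(d(m+n))$ numbers with entrywise magnitude controlled by $\mu$; a standard volumetric covering bound for bounded rank-$d$ matrices then gives $\ln|\mathcal{F}_\epsilon|\le d(m+n)\ln(48emn/\epsilon)=d(m+n)\ln(48emn)+d(m+n)\ln|\Omega|=\mathcal{O}(C\ln|\Omega|)$, using $C=d(m+n)\ln(48emn)\ge d(m+n)$ and $\ln|\Omega|\ge 1$. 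Applying Lemma~\ref{thm:radmchr} to the finite class $\mathcal{F}_\epsilon$ (whose entries still obey the same $\mu$ and $\rho$ bounds, which the net can be chosen to preserve) yields $\hat{\mathcal{R}}_{|\Omega|}(\mathcal{F}_\epsilon)\le\mu\sqrt{2\rho\ln|\mathcal{F}_\epsilon|/|\Omega|}=\mathcal{O}(\mu\sqrt{C\rho\ln|\Omega|/|\Omega|})$. Finally, because $\ell$ is $L$-Lipschitz and the net is $\epsilon$-dense, passing from $\mathcal{F}_\epsilon$ back to $\mathcal{F}$ inflates the complexity by at most $\mathcal{O}(\epsilon)=\mathcal{O}(1/|\Omega|)$, which is dominated by the $1/\sqrt{|\Omega|}$ terms; multiplying by $2L$ and combining with the display above produces the theorem.

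The main obstacle is the covering step: establishing $\ln|\mathcal{F}_\epsilon|\le d(m+n)\ln(48emn/\epsilon)$ with the stated constant requires a careful argument for bounded-rank matrices (for instance, covering each of the two thin factors of $\mathbf{VB}$ separately and composing, or counting sign/level patterns via Warren-type bounds), and one must verify that the net respects both the per-entry magnitude bound $\mu$ and the row-sparsity budget $\rho$ so that Lemma~\ref{thm:radmchr} applies with unchanged constants. Tracking how the resolution $\epsilon=1/|\Omega|$ converts the covering radius into the $\ln|\Omega|$ factor while keeping the discretization error lower-order is the delicate bookkeeping, but it becomes routine once the covering number is in hand.
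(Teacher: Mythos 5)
Your proposal is correct in outline and shares the paper's overall skeleton --- apply Theorem~\ref{thm:mohri} to reduce everything to the empirical Rademacher complexity, then invoke Lemma~\ref{thm:radmchr} with an effective $\ln|\mathcal{F}|$ of order $C\ln|\Omega|$ --- but you obtain that effective finiteness by a genuinely different device. The paper's proof handles it combinatorially: since $\hat{\mathbf{R}}=\mathbf{P}(\mathbf{VB})$ has rank at most $d$, it cites the bound of Srebro et al.\ that the VC-dimension of rank-$d$ $m\times n$ matrix classes is at most $d(m+n)\log(48emn)=C$ (a Warren-type sign-pattern count, which is exactly the alternative you mention in passing), and then applies Sauer's Lemma to conclude $\ln|\mathcal{F}|< C\ln|\Omega|$ on the sample; no $\epsilon$-net, no discretization error, and the constant $48$ is inherited directly from that citation. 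You instead build a metric $\epsilon$-net at resolution $\epsilon=1/|\Omega|$ over the bounded rank-$d$ class, bound its log-cardinality volumetrically by $\mathcal{O}(C\ln|\Omega|)$, and absorb the $\mathcal{O}(1/|\Omega|)$ discretization error. The trade-off: the paper's route is shorter and delivers the stated constant for free, but its use of Sauer's Lemma on a real-valued class is itself loose (growth-function arguments strictly count dichotomies, not real-valued behaviors), whereas your covering-number route is the standard way to make such a step rigorous for real-valued classes --- at the price of the extra bookkeeping you flag, namely constructing the net so that its elements retain the factored form with the same $\rho$ and $\mu$ so that Lemma~\ref{thm:radmchr} still applies. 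Since the final bound is stated with $\mathcal{O}(\cdot)$, the constant slack your construction introduces inside the logarithm is immaterial, so both arguments land on the same statement.
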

\begin{proof}
As the fact $m\gg{d}$, $\hat{\mathbf{R}}$ is actually a rank-$d$ approximation of $\mathbf{R}$, of which the VC-dimension is upper bounded by $d(M+N)\log(48eMN)$ \cite{srebro2004generalization}. Then using Sauer's Lemma, we bound $\ln|\mathcal{F}|<d(m+n)\log(48emn)\ln|\Omega|$. After combining Lemma \ref{thm:radmchr} and Theorem \ref{thm:mohri}, we can conclude the result in the theorem.
\end{proof}

\end{document}